\documentclass[11pt]{article}

\usepackage{graphicx}
\usepackage{xcolor}
\usepackage{geometry}
\usepackage{amsmath}
\usepackage{amssymb}
\usepackage{amsthm}
\usepackage{booktabs}
\usepackage{enumitem}

\usepackage[
    colorlinks=true,
    linkcolor=blue,
    citecolor=darkgray,
    urlcolor=blue,
    bookmarks=true,
    bookmarksnumbered=true,
    unicode=true
]{hyperref}

\hypersetup{
    pdftitle={A Mirror Vanishing Band for Weight Distributions of Binary Linear Codes},
    pdfauthor={},
    pdfsubject={Coding theory},
    pdfkeywords={linear codes, weight distribution, minimal codewords, vanishing weights}
}

\newtheorem{theorem}{Theorem}
\newtheorem{lemma}[theorem]{Lemma}
\newtheorem{corollary}[theorem]{Corollary}
\newtheorem{proposition}[theorem]{Proposition}
\theoremstyle{remark}
\newtheorem{remark}[theorem]{Remark}
\theoremstyle{definition}
\newtheorem{definition}[theorem]{Definition}
\newtheorem{example}[theorem]{Example}

\DeclareMathOperator{\supp}{supp}
\DeclareMathOperator{\wt}{wt}

\title{A Mirror Vanishing Band for Weight Distributions of Binary Linear Codes}
\author{Xianmang He}
\date{}

\begin{document}

\maketitle

\begin{abstract}
Chen and Xie  recently proved, using the
Ashikhmin--Barg lemma on minimal vectors, that every binary linear
$[n,k,d]$ code with $k=n-2d+2+v$ ($v\ge 0$) has no codewords of weight in
the interval $[2d-v,\,2d-1]$. Their argument uses two of the five basic
properties of minimal vectors established by Ashikhmin and Barg (1998).
In this note we utilize the third property, the disjoint-support
decomposition of non-minimal codewords in binary codes, to generate  a
\emph{mirror} vanishing band on the other side of $2d$: if
$A_{d+1}=\cdots=A_{d+t}=0$ for some $t\ge 1$ and $k\ge n-2d+1$, then
$A_w=0$ for all $w\in[2d+1,\,2d+t]$. Combining the two bands, the number
of nonzero weights of such a code is at most $n-d-v-2t+1$, improving the
Chen--Xie bound $n-d+1-v$ by $2t$. 
\end{abstract}

\noindent\textbf{Keywords:} linear codes, weight distribution, minimal
codewords, vanishing partial weight distribution


\section{Introduction}\label{sec:intro}

Determining the weight distribution $\{A_i\}_{i=0}^{n}$ of a linear code,
or even deciding which partial weight distributions $A_i$ must vanish, is
a classical and difficult problem in coding theory. The classical theorem
of Delsarte bounds the size of a code with at most $s$ distinct
distances~\cite{delsarte1973}, and the maximal number of nonzero weights
of a $k$-dimensional $q$-ary linear code is known to be
$\frac{q^k-1}{q-1}$~\cite{shi2019}, with tighter bounds for special
classes such as cyclic codes~\cite{chenzhang2023}. Until recently,
results that force $A_i=0$ for \emph{specific} weights $i$ from the four
parameters $(n,k,d,q)$ alone were essentially unknown. This changed with
the work of Chen and Xie~\cite{chenxie2024}, who proved the following
striking bound: if in a linear $[n,k,d]_q$ code the largest weight
$i\le \frac{qd}{q-1}-1$ with $A_i>0$ equals $\frac{qd}{q-1}-1-v$, then
\begin{equation}\label{eq:cxbound}
k\le n-d\Big(1+\frac{1}{q-1}\Big)+2+v.
\end{equation}
Consequently, when $v=\frac{qd}{q-1}+k-n-2\ge 2$, the code has no
codewords of weight in $\big[\frac{qd}{q-1}-v,\,\frac{qd}{q-1}-1\big]$.
For binary codes this reads: $k=n-2d+2+v$ implies
\begin{equation}\label{eq:cxband}
A_w=0 \quad\text{for all } w\in[2d-v,\,2d-1],
\end{equation}
a vanishing band \emph{just below} twice the minimum distance.

The proof of~\eqref{eq:cxbound} rests on two properties of minimal
codewords due to Ashikhmin and Barg~\cite{ashikhminbarg1998}: every
codeword of weight at most $\frac{qd}{q-1}-1$ is minimal, and every
minimal codeword has weight at most $n-k+1$. However, the
Ashikhmin--Barg lemma contains a further structural statement for binary
codes that the argument of~\cite{chenxie2024} does not use: every
non-minimal codeword of a binary linear code splits into two nonzero
codewords with \emph{disjoint supports}, and therefore has weight at
least $2d$.

The purpose of this note is to show that this disjoint-support
decomposition yields new vanishing results for weight distributions of
binary linear codes that are not covered by~\cite{chenxie2024} or its
recent strengthening via residual codes~\cite{newbounds2025}. Our main
result (Theorem~\ref{thm:main}) is a \emph{mirror} vanishing band
\emph{above} $2d$: gaps in the weight distribution immediately above $d$
force gaps immediately above $2d$. As a consequence we improve the
Chen--Xie upper bound on the number of nonzero weights
(Corollary~\ref{cor:count}), characterize codewords of weight exactly
$2d$ (Proposition~\ref{prop:2d}), and delineate the boundary of the
method by showing that the $q$-ary analogue fails already for MDS codes
(Section~\ref{sec:qary}).

The paper is organized as follows. Section~\ref{sec:prelim} recalls the
necessary facts about minimal vectors. Section~\ref{sec:main} states and
proves the main results. Section~\ref{sec:examples} gives examples and
numerical verification. Section~\ref{sec:conclusion} concludes and outlines future work.

\section{Preliminaries}\label{sec:prelim}

Let $C$ be a linear $[n,k,d]$ code over $\mathbb{F}_q$. For $c\in C$ the
\emph{support} of $c$ is $\supp(c)=\{i : c_i\neq 0\}$, and we write
$c'\preceq c$ if $\supp(c')\subseteq\supp(c)$.

\begin{definition}\label{def:minimal}
A nonzero codeword $c\in C$ is called \emph{minimal} if $0\neq c'\preceq c$
implies $c'=\lambda c$ for some $\lambda\in\mathbb{F}_q^{*}$.
\end{definition}

\begin{lemma}[Ashikhmin--Barg~\cite{ashikhminbarg1998}, Lemma~2.1]\label{lem:ab}
Let $C$ be a linear $[n,k,d]_q$ code. Then:
\begin{enumerate}[label=(\roman*),itemsep=0.2em]
\item\label{it:bound} if $c$ is minimal, then $\wt(c)\le n-k+1$;
\item\label{it:threshold} every codeword of weight at most
$d\big(1+\frac{1}{q-1}\big)-1$ is minimal;
\item\label{it:span} the minimal vectors of $C$ span $C$;
\item\label{it:disjoint} if $q=2$ and $c$ is not minimal, then there exist
nonzero $c_1,c_2\in C$ with
$\supp(c_1)\cap\supp(c_2)=\varnothing$ such that $c=c_1+c_2$.
\end{enumerate}
\end{lemma}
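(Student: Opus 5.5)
We prove the four items of Lemma~\ref{lem:ab} separately, since each rests on a different elementary fact: a dimension count for \ref{it:bound}, a subtraction argument for \ref{it:threshold}, induction on weight for \ref{it:span}, and the same weight induction specialized to $q=2$ for \ref{it:disjoint}.

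For \ref{it:bound}, let $c$ be minimal with support $S$, $|S|=w$. Consider the subspace $C_S=\{c'\in C:\supp(c')\subseteq S\}$. Minimality says exactly that $C_S=\mathbb{F}_q c$, so $\dim C_S=1$. On the other hand, $C_S$ is the kernel of the projection of $C$ onto the $n-w$ coordinates outside $S$. That projection has rank at most $n-w$, so
\[
\dim C_S\ge k-(n-w).
\]
Hence $1\ge k-n+w$, i.e.\ $w\le n-k+1$.

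For \ref{it:threshold}, suppose $c$ is not minimal. Then there is a nonzero $c'\preceq c$ that is not a scalar multiple of $c$.
\begin{itemize}[label=$\cdot$]
\item For each $\lambda\in\mathbb{F}_q^*$, the word $c-\lambda c'$ is supported in $\supp(c)$ and is nonzero.
\item Fix a position $i\in\supp(c')$. Exactly one $\lambda$ makes $c_i-\lambda c'_i=0$.
\item Summing over the $q-1$ choices of $\lambda$, each position of $\supp(c')$ is cancelled once and each position of $\supp(c)\setminus\supp(c')$ is never cancelled. This gives
\[
\sum_{\lambda}\wt(c-\lambda c')=(q-1)\wt(c)-\wt(c').
\]
\item Each term is at least $d$, and $\wt(c')\ge d$. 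Therefore $(q-1)d\le (q-1)\wt(c)-d$, so $\wt(c)\ge qd/(q-1)$.
\end{itemize}
Contrapositively, every codeword of weight at most $qd/(q-1)-1$ is minimal.

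For \ref{it:span} and \ref{it:disjoint} we use induction on weight. If $c$ is not minimal, take $c'\preceq c$ nonzero and not proportional to $c$. Choose $i\in\supp(c')$ and $\lambda=c_i/c'_i$. Then $c-\lambda c'$ is nonzero and has strictly smaller support than $c$. Also $c'$ itself has strictly smaller support: if $\supp(c')=\supp(c)$, the same choice of $\lambda$ would give a nonzero codeword $c-\lambda c'$ with smaller support, and we could replace $c'$ by it. So $c=(c-\lambda c')+\lambda c'$ writes $c$ as a combination of two lower-weight codewords, and by induction $C$ is spanned by minimal vectors, proving \ref{it:span}.

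For \ref{it:disjoint} with $q=2$, take $c$ non-minimal and $c'\prec c$ nonzero with $c'\ne c$. Set $c_1=c'$ and $c_2=c+c'$. Over $\mathbb{F}_2$ we have $\supp(c_2)=\supp(c)\setminus\supp(c')$, so the two supports are disjoint. Both words are nonzero, and $c=c_1+c_2$.

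The only subtle point is the counting identity in \ref{it:threshold}, namely that each coordinate of $\supp(c')$ is zeroed by exactly one $\lambda$. The other three items are routine.
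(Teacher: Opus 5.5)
Your proof is correct in all four parts. The paper gives no proof of this lemma; it quotes it from Ashikhmin--Barg. Your arguments are the standard ones for these properties: the kernel-of-projection dimension count for (i), averaging $\wt(c-\lambda c')$ over the $q-1$ nonzero scalars for (ii), strong induction on weight for (iii), and the complementary word $c+c'$ for (iv).

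Two small presentation points. In (iii), after replacing $c'$ by $c-\lambda c'$, you must choose a fresh position $i$ and scalar $\lambda$ relative to the new $c'$. Also, the argument you actually give for (iv) is direct and uses no induction, contrary to what your opening overview says.
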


Property~\ref{it:disjoint} immediately implies that a non-minimal
codeword of a binary code has weight at least $2d$; equivalently, every
binary codeword of weight at most $2d-1$ is minimal, which is exactly
property~\ref{it:threshold} for $q=2$.

\begin{lemma}[Chen--Xie~\cite{chenxie2024}, binary case of Theorems~1--2]\label{lem:cx}
Let $C$ be a binary linear $[n,k,d]$ code with $k=n-2d+2+v$, $v\ge 0$.
Then $A_w=0$ for all $w\in[2d-v,\,2d-1]$, and $C$ has at most $n-d+1-v$
nonzero weights.
\end{lemma}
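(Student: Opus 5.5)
We prove the vanishing band by contradiction, combining Lemma~\ref{lem:ab}\ref{it:threshold} with a projection (puncturing) argument of the kind used in the proof of Lemma~\ref{lem:ab}\ref{it:bound}, and then obtain the count of nonzero weights by simple bookkeeping.

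\emph{Step 1 (setup).} Suppose $c\in C$ has weight $w\in[2d-v,\,2d-1]$, and put $S=\supp(c)$, so $|S|=w$. Since $w\le 2d-1$, Lemma~\ref{lem:ab}\ref{it:threshold} with $q=2$ says that $c$ is minimal. Let $\pi:C\to\mathbb{F}_2^{\,n-w}$ be the projection onto the coordinates outside $S$. Its kernel consists of the codewords supported inside $S$. By minimality (and $q=2$) this kernel is exactly $\{0,c\}$. Hence $\pi(C)$ is a binary code of length $n-w$ and dimension $k-1$.

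\emph{Step 2 (minimum distance of $\pi(C)$).} Take $x\in C\setminus\{0,c\}$. Over $\mathbb{F}_2$ the weights of $x$ and $x+c$ restricted to $S$ add up to exactly $w$. So one of these two codewords, call it $y$, has at most $\lfloor w/2\rfloor$ nonzero coordinates in $S$. Since $y\neq 0$, we have $\wt(y)\ge d$, and therefore
\[
\wt(\pi(x))=\wt(\pi(y))\ge d-\lfloor w/2\rfloor .
\]
Because $w\le 2d-1$, this lower bound is at least $1$, so $\pi(C)$ is an $[n-w,\,k-1,\,\ge d-\lfloor w/2\rfloor]$ code. This halving trick is the only genuinely binary ingredient, and it is the step needing care. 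One has to check that the pairing $x\leftrightarrow x+c$ stays inside $C\setminus\{0,c\}$, and that $\pi(x)=\pi(x+c)$.

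\emph{Step 3 (Singleton and contradiction).} The Singleton bound for $\pi(C)$ gives $k-1\le (n-w)-(d-\lfloor w/2\rfloor)+1$. Substituting $k=n-2d+2+v$ yields $\lceil w/2\rceil\le d-v$, i.e.\ $w\le 2d-2v$. When $v\ge 1$ this contradicts $w\ge 2d-v>2d-2v$. When $v=0$ the interval $[2d,2d-1]$ is empty and there is nothing to prove. Hence $A_w=0$ on $[2d-v,2d-1]$. The argument in fact gives the slightly wider band $[2d-2v+1,2d-1]$; we only need the stated one. The main risk is an off-by-one in the floor and ceiling bookkeeping, which we would verify on small cases, e.g.\ $d=2$.

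\emph{Step 4 (counting).} By the first part, the possible nonzero weights lie in $[d,\,2d-v-1]\cup[2d,\,n]$. The first interval contains $d-v$ integers and the second contains $n-2d+1$. So there are at most $n-d+1-v$ nonzero weights, as claimed.
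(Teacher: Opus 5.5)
Your argument is correct (modulo a tacit hypothesis noted at the end), but it follows a different route from the one the paper relies on.

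\textbf{The paper's route.} Following Chen--Xie, as described in the introduction, the band comes in one line from Lemma~\ref{lem:ab}. A codeword of weight $w\le 2d-1$ is minimal by \ref{it:threshold}. Hence its weight is at most $n-k+1=2d-1-v$ by \ref{it:bound}, which excludes $[2d-v,2d-1]$.

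\textbf{Your route.} Your map $\pi$ is the residual code of $C$ with respect to $c$. The proof of \ref{it:bound} amounts to the bare dimension count $k-1\le n-w$ for this residual code. You additionally use its minimum distance $d-\lfloor w/2\rfloor$, via the binary halving trick, and then apply Singleton. This costs a few lines but gives the strictly wider band $[2d-2v+1,2d-1]$ when $v\ge 2$; the two bands coincide for $v=1$. This is exactly the residual-code strengthening the paper attributes to its recent follow-up. For example, your band also excludes $A_{13}$ for $[21,9,8]_2$, where $v=2$.

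\textbf{Two bookkeeping points.}

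First, Singleton on $\pi(C)$ needs $k-1\ge 1$. When $k=1$ one has $2d-v=n+1$, so the band is vacuous and nothing is lost.

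Second, in Step~4 the interval $[2d,n]$ has $n-2d+1$ elements only when $n\ge 2d-1$. This is a hidden hypothesis of the statement itself, not a defect peculiar to your argument. For the repetition code $[2,1,2]$ one has $v=1$ and $n-d+1-v=0$, yet $C$ has one nonzero weight. You should state $n\ge 2d-1$ explicitly for the counting claim. Under that hypothesis your count $(d-v)+(n-2d+1)$ is right, and $d-v\ge 1$ follows from Singleton, since $v\le d-1$.
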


The disjoint-support decomposition
(Lemma~\ref{lem:ab}\ref{it:disjoint}) is classical and has been used in
the literature, e.g., by Borissov and Manev~\cite{borissovmanev2001},
who \emph{counted} the non-minimal codewords of weight exactly
\(2d_{\min}\) in binary Reed--Muller codes. That work concerns counting
within a specific family, not general vanishing bands derived from
\((n,k,d,q)\) and local gap information. Chen and Xie~\cite{chenxie2024}
established the band below \(2d\) and the bound \(s(C)\le n-d+1-v\), and
the upper band and the improvement by \(2t\) in
Corollary~\ref{cor:count} are new. A recent
follow-up~\cite{newbounds2025} strengthens the Chen--Xie bound via
residual codes and enlarges the excluded weight range \emph{below} \(2d\)
(e.g., additionally excluding weight \(13\) for \([21,9,8]_2\)). Its
excluded intervals likewise lie below \(2d\), and its method (residual
codes plus Singleton/Griesmer-type bounds~\cite{griesmer1960}) is independent of the
disjoint-decomposition argument used here.
 To the best of our knowledge, the statement of
Theorem~\ref{thm:main}, from \(A_{d+1..d+t}=0\) and
\(k\ge n-2d+1\) to \(A_{2d+1..2d+t}=0\), does not appear in the
literature.

\section{Main Results}\label{sec:main}

This section contains the main results. We first prove the mirror band above
\(2d\), then combine it with the Chen-Xie band to count possible nonzero
weights, then analyze the remaining critical weight \(2d\), and finally explain
why the argument is specific to binary codes.

\subsection{The mirror vanishing band above $2d$}

We begin with the central statement.

\begin{theorem}\label{thm:main}
Let $C$ be a binary linear $[n,k,d]$ code satisfying
\begin{equation}\label{eq:cond}
k\ge n-2d+1 \qquad\Big(\text{equivalently, } d\ge \tfrac{n-k+1}{2}\Big).
\end{equation}
Suppose that for some integer $t\ge 1$,
\begin{equation}\label{eq:gaps}
A_{d+1}=A_{d+2}=\cdots=A_{d+t}=0.
\end{equation}
Then
\begin{equation}\label{eq:upperband}
A_w=0 \quad\text{for all } w\in[2d+1,\,2d+t].
\end{equation}
\end{theorem}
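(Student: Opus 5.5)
The plan is to combine properties \ref{it:bound} and \ref{it:disjoint} of Lemma~\ref{lem:ab}. Hypothesis \eqref{eq:cond} will force every codeword of weight above $2d$ to be non-minimal. The disjoint-support decomposition will then split such a codeword into two pieces, at least one of whose weights falls into the forbidden window $[d+1,d+t]$ of \eqref{eq:gaps}.

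First, I fix $w\in[2d+1,2d+t]$ and suppose, for contradiction, that there is a codeword $c\in C$ with $\wt(c)=w$. By Lemma~\ref{lem:ab}\ref{it:bound}, every minimal codeword has weight at most $n-k+1$. By \eqref{eq:cond}, $n-k+1\le 2d$. Since $w\ge 2d+1>n-k+1$, the codeword $c$ is not minimal.

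Second, Lemma~\ref{lem:ab}\ref{it:disjoint} (which uses $q=2$) gives nonzero $c_1,c_2\in C$ with disjoint supports and $c=c_1+c_2$. Disjointness gives $\wt(c_1)+\wt(c_2)=w$. Set $a=\wt(c_1)$, so $\wt(c_2)=w-a$. Both are nonzero codewords, so $a\ge d$ and $w-a\ge d$. Hence $d\le a\le w-d\le d+t$, and symmetrically $d\le w-a\le d+t$.

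Third, I pin one of these weights strictly above $d$. Without loss of generality $a\ge w-a$, so $a\ge w/2\ge d+\tfrac12$, hence $a\ge d+1$. Together with $a\le w-d\le d+t$, this gives $a\in[d+1,d+t]$. Then $c_1$ is a codeword of weight in the band where \eqref{eq:gaps} asserts $A_a=0$, a contradiction. Therefore $A_w=0$ for every $w\in[2d+1,2d+t]$, which is \eqref{eq:upperband}.

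There is no serious obstacle. The only point needing care is the first step, checking that \eqref{eq:cond} is exactly what makes $n-k+1\le 2d$, so that minimality is ruled out for all weights $\ge 2d+1$. This is also why the statement begins at $2d+1$ rather than $2d$: a codeword of weight exactly $2d$ may be minimal when $k=n-2d+1$, and even if it is not, it may split as $d+d$.
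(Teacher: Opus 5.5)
Your proof is correct and follows essentially the same route as the paper: you rule out minimality using property (i) of the Ashikhmin--Barg lemma together with \eqref{eq:cond}, then apply the disjoint-support decomposition and the gap hypothesis \eqref{eq:gaps}. The only cosmetic difference is that you place the heavier summand in $[d+1,d+t]$, whereas the paper notes that \eqref{eq:gaps} forces both summands to have weight $d$, which gives $w=2d$.
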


\begin{proof}
Let $c\in C$ with $\wt(c)=w\in[2d+1,\,2d+t]$. We derive a contradiction
in each of the two possible cases.

\emph{Case 1: $c$ is minimal.}

By Lemma~\ref{lem:ab}\ref{it:bound},
$w\le n-k+1$. Assumption~\eqref{eq:cond} gives $n-k+1\le 2d$, hence
$w\le 2d<2d+1\le w$, this gives a contradiction.

\emph{Case 2: $c$ is not minimal.}

 By Lemma~\ref{lem:ab}\ref{it:disjoint},
there exist nonzero $c_1,c_2\in C$ with disjoint supports such that
$c=c_1+c_2$. Then
\[
\wt(c_1)+\wt(c_2)=w\le 2d+t.
\]
Since $\wt(c_j)\ge d$ for $j=1,2$, we obtain
\[
\wt(c_j)=w-\wt(c_{3-j})\le (2d+t)-d=d+t,\qquad j=1,2,
\]
so $\wt(c_j)\in[d,\,d+t]$. By assumption~\eqref{eq:gaps} the weights
$d+1,\dots,d+t$ do not occur, forcing $\wt(c_1)=\wt(c_2)=d$ and therefore
$w=2d$, which also gives a contradicting $w\ge 2d+1$.

Both cases are impossible,  hence we have  $A_w=0$, where $w\in[2d+1,\,2d+t]$.
\end{proof}

\begin{remark}\label{rem:mirror}
Theorem~\ref{thm:main} is structurally a mirror image of
Lemma~\ref{lem:cx}: the Chen--Xie band $[2d-v,2d-1]$ lies \emph{below}
$2d$ and is forced by the four parameters $(n,k,d,q)$ alone, whereas the
band $[2d+1,2d+t]$ lies \emph{above} $2d$ and additionally requires the
local information~\eqref{eq:gaps}. The weight $2d$ itself may well occur
(for instance $A_{16}=759$ in the extended Golay code), so the window
between the two bands is exactly the singleton $\{2d\}$.
\end{remark}

\begin{remark}\label{rem:minimalupto}
The proof of Theorem~\ref{thm:main} shows more: under~\eqref{eq:cond}
and~\eqref{eq:gaps}, every nonzero codeword of weight at most $2d+t$
either has weight exactly $2d$ or is minimal, and a codeword of weight
$2d$ is non-minimal if and only if it is the sum of two minimum-weight
codewords with disjoint supports.
\end{remark}

\subsection{Two-sided bands and the number of nonzero weights}
The mirror band is most informative when read together with the Chen-Xie band  below \(2d\). The two bands together squeeze the possible nonzero weights into  two singleton weights and two intervals, which yields the following counting bound.

\begin{corollary}\label{cor:twosided}
Let $C$ be a binary linear $[n,k,d]$ code with $k=n-2d+2+v$ ($v\ge 0$),
and suppose $A_{d+1}=\cdots=A_{d+t}=0$ for some $t\ge 1$. Then
\[
A_w=0 \quad\text{for all } w\in[2d-v,\,2d-1]\cup[2d+1,\,2d+t],
\]
i.e., the nonzero weights of $C$ can only lie in
\[
\{d\}\ \cup\ [d+t+1,\,2d-v-1]\ \cup\ \{2d\}\ \cup\ [2d+t+1,\,n].
\]
\end{corollary}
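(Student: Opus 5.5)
The corollary follows by putting Lemma~\ref{lem:cx} and Theorem~\ref{thm:main} side by side. I will check that the hypotheses of each are met, take the union of the two vanishing bands, and then describe what is left of $[1,n]$.

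\emph{Lower band.} Since $k=n-2d+2+v$ with $v\ge 0$, Lemma~\ref{lem:cx} applies as stated and gives $A_w=0$ for all $w\in[2d-v,\,2d-1]$. When $v=0$ this interval is empty, so the claim is vacuous in that case.

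\emph{Upper band.} Theorem~\ref{thm:main} requires $k\ge n-2d+1$. This holds because $k=n-2d+2+v\ge n-2d+2>n-2d+1$. Together with the gap hypothesis $A_{d+1}=\cdots=A_{d+t}=0$, the theorem yields $A_w=0$ for all $w\in[2d+1,\,2d+t]$. Combining the two bands gives the displayed vanishing statement.

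\emph{Where nonzero weights can lie.} A nonzero weight must lie in $[d,n]$. From this range I remove the three intervals $[d+1,d+t]$ (by hypothesis), $[2d-v,2d-1]$ and $[2d+1,2d+t]$. What remains is exactly $\{d\}\cup[d+t+1,\,2d-v-1]\cup\{2d\}\cup[2d+t+1,\,n]$, where any interval may be empty. If the removed intervals overlap, for example if $d+t\ge 2d-v$, the description is still correct, because the middle interval is then empty and the union of removed sets is still what gets excluded.

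The only point needing care is this bookkeeping: the listed sets must cover every weight not excluded, including degenerate cases where intervals are empty or exceed $n$. That is a routine set-difference check. The substantive content comes entirely from the two invoked results.
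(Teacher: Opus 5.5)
Your proposal is correct and follows the paper's proof. The paper likewise notes that $n-k+1=2d-v-1<2d$, so condition~\eqref{eq:cond} holds automatically, and then takes the union of the Chen--Xie band from Lemma~\ref{lem:cx} with the mirror band from Theorem~\ref{thm:main}. One small quibble with your bookkeeping: the leftover set is only contained in the displayed union, not always equal to it. For example, if $t\ge d$ then $2d\in[d+1,d+t]$ is already excluded. Containment is all the statement claims, so this does not affect correctness.
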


\begin{proof}
Here $n-k+1=2d-v-1\le 2d-1<2d$, so condition~\eqref{eq:cond} of
Theorem~\ref{thm:main} holds automatically. The lower band is
Lemma~\ref{lem:cx}, and in the meanwhile the upper band is Theorem~\ref{thm:main}.
\end{proof}

\begin{corollary}[Improved bound on the number of nonzero weights]\label{cor:count}
Under the hypotheses of Corollary~\ref{cor:twosided}, the number $s(C)$
of nonzero weights of $C$ satisfies
\[
s(C)\le 2+\max(0,\,d-v-t-1)+\max(0,\,n-2d-t).
\]
In particular, if $d\ge v+t+1$ and $n\ge 2d+t$, then
\[
s(C)\le n-d-v-2t+1,
\]
which improves the Chen--Xie bound $s(C)\le n-d+1-v$
(Lemma~\ref{lem:cx}) by exactly $2t$.
\end{corollary}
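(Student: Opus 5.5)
The plan is to count the elements of the set of admissible nonzero weights from Corollary~\ref{cor:twosided}. By that corollary, every nonzero weight of $C$ lies in
\[
S=\{d\}\cup[d+t+1,\,2d-v-1]\cup\{2d\}\cup[2d+t+1,\,n],
\]
so $s(C)\le |S|$, and it suffices to bound $|S|$. I would bound $|S|$ by the sum of the cardinalities of the four pieces; this needs no disjointness. Disjointness does hold anyway, since $d<d+t+1$, $2d-v-1<2d$ and $2d<2d+t+1$, so the pieces occur in increasing order.

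The singletons $\{d\}$ and $\{2d\}$ contribute at most $1$ each. If $2d>n$, then $2d$ is not an admissible weight, and counting it only weakens the bound, which is harmless. An integer interval $[a,b]$ has $\max(0,b-a+1)$ elements. This gives the two interval contributions:
\[
(2d-v-1)-(d+t+1)+1=d-v-t-1,
\qquad
n-(2d+t+1)+1=n-2d-t .
\]
Adding the four contributions gives the first inequality,
\[
s(C)\le 2+\max(0,d-v-t-1)+\max(0,n-2d-t).
\]

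For the special case, assume $d\ge v+t+1$ and $n\ge 2d+t$. Then both maxima equal their second arguments, and the sum simplifies to
\[
2+(d-v-t-1)+(n-2d-t)=n-d-v-2t+1.
\]
Finally, subtracting this from the Chen--Xie bound $n-d+1-v$ of Lemma~\ref{lem:cx} leaves exactly $2t$, which proves the claimed improvement.

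I do not expect a genuine obstacle, since all the content is in Theorem~\ref{thm:main} and Lemma~\ref{lem:cx} via Corollary~\ref{cor:twosided}. The only care needed is to handle empty intervals through the $\max(0,\cdot)$ terms and to read the statement as an upper bound, so the possible non-occurrence of the weights $d$ or $2d$ causes no issue.
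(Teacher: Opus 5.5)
Your proposal is correct and follows the paper's proof essentially verbatim: both bound $s(C)$ by counting the four pieces $\{d\}$, $[d+t+1,\,2d-v-1]$, $\{2d\}$, $[2d+t+1,\,n]$ of the admissible set from Corollary~\ref{cor:twosided}, and both use the $\max(0,\cdot)$ terms to handle empty intervals. Your explicit check of the specialized arithmetic, and of the gap of exactly $2t$ to the Chen--Xie bound, fills in steps the paper leaves implicit.
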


\begin{proof}
Count the intervals allowed by Corollary~\ref{cor:twosided}: $\{d\}$ and
$\{2d\}$ contribute $1$ each; $[d+t+1,\,2d-v-1]$ contributes
$\max(0,\,d-v-t-1)$; $[2d+t+1,\,n]$ contributes $\max(0,\,n-2d-t)$.
\end{proof}

\subsection{Codewords of weight exactly $2d$}\label{sec:2d}
The two bands leave \(2d\) as the only weight near the middle that is not
automatically excluded. The next proposition characterizes when this borderline
weight can occur, reducing the question to the intersection pattern of
minimum-weight codewords.

\begin{proposition}\label{prop:2d}
Let $C$ be a binary linear $[n,k,d]$ code. The following are equivalent:
\begin{enumerate}[label=(\roman*),itemsep=0.2em]
\item there exists a non-minimal codeword of weight $2d$;
\item there exist two minimum-weight codewords with disjoint supports.
\end{enumerate}
Consequently, if the minimum-weight codewords of $C$ are pairwise
intersecting, then every codeword of weight $2d$ is minimal; if in
addition $2d>n-k+1$, then $A_{2d}=0$.
\end{proposition}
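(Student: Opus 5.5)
We prove the two implications directly and then read off the consequences from Lemma~\ref{lem:ab}.

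For (i)$\Rightarrow$(ii), take a non-minimal codeword $c$ with $\wt(c)=2d$. By Lemma~\ref{lem:ab}\ref{it:disjoint} we can write $c=c_1+c_2$, where $c_1,c_2$ are nonzero codewords with disjoint supports. Disjointness gives $\wt(c_1)+\wt(c_2)=2d$. Since each $\wt(c_j)\ge d$, both weights must equal $d$. Hence $c_1,c_2$ are two minimum-weight codewords with disjoint supports.

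For (ii)$\Rightarrow$(i), let $c_1,c_2$ be minimum-weight codewords with disjoint supports and set $c=c_1+c_2$. Over $\mathbb{F}_2$ disjointness gives $\supp(c)=\supp(c_1)\cup\supp(c_2)$, so $\wt(c)=2d$. Now $c_1\neq 0$ and $\supp(c_1)\subseteq\supp(c)$. The only nonzero scalar multiple of $c$ in the binary case is $c$ itself, and $c_1\neq c$ because $c_2\neq 0$. So $c$ violates Definition~\ref{def:minimal} and is non-minimal.

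For the consequences, suppose the minimum-weight codewords pairwise intersect. Then (ii) fails, so by the equivalence (i) fails: every codeword of weight $2d$ is minimal. If in addition $2d>n-k+1$, Lemma~\ref{lem:ab}\ref{it:bound} says a minimal codeword has weight at most $n-k+1<2d$. So no codeword of weight $2d$ can exist, and $A_{2d}=0$.

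The argument is a direct application of Lemma~\ref{lem:ab} and has no real obstacle. The only point needing care is the step in (ii)$\Rightarrow$(i) that $c_1\neq\lambda c$, which uses that $\mathbb{F}_2^{*}=\{1\}$.
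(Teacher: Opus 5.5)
Your proof is correct and follows the paper's argument essentially step for step. For (i)$\Rightarrow$(ii) you use the disjoint decomposition of Lemma~\ref{lem:ab}\ref{it:disjoint}, with the weight count forcing both summands to weight $d$; for (ii)$\Rightarrow$(i) you note that the sum covers each summand; and Lemma~\ref{lem:ab}\ref{it:bound} gives $A_{2d}=0$. Your remark that $\mathbb{F}_2^{*}=\{1\}$ rules out $c_1=\lambda c$ only makes explicit the paper's phrase ``not proportional to it.''
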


\begin{proof}
(i)$\Rightarrow$(ii): by Lemma~\ref{lem:ab}\ref{it:disjoint}, a
non-minimal $c$ of weight $2d$ splits as $c_1+c_2$ with disjoint nonzero
summands, $\wt(c_1)+\wt(c_2)=2d$, each of weight $\ge d$, hence of weight
exactly $d$.
(ii)$\Rightarrow$(i): the sum of two disjoint minimum-weight codewords
has weight $2d$ and covers either summand, which is not proportional to
it; hence it is non-minimal.
For the last statement, minimal codewords have weight $\le n-k+1<2d$ by
Lemma~\ref{lem:ab}\ref{it:bound}.
\end{proof}

\begin{remark}\label{rem:intersecting}
Proposition~\ref{prop:2d} applies to the classical family of
\emph{intersecting codes}: Ashikhmin and
Barg~\cite{ashikhminbarg1998} showed, via the Carlitz--Uchiyama bound,
that the duals of primitive binary BCH codes of length $2^m-1$ with
designed distance $2t+1$ (for $t$ in a suitable range) are intersecting,
i.e., all their nonzero codewords are minimal. For such codes the
hypothesis of Proposition~\ref{prop:2d} is automatic.
\end{remark}

\subsection{Why the $q$-ary analogue fails}\label{sec:qary}

The main theorem’s proof uses a special fact that is true only for binary codes ($q=2$): if a codeword is not minimal, it can be split into two nonzero codewords whose supports do not overlap. This fact is false for codes over larger alphabets ($q\ge 3$), so the same proof cannot be used for $q$-ary codes.
If you try a similar argument for $q$-ary codes, you can find a smaller codeword with weight 
at most $wt(c)-d/q$. But that smaller codeword might have weight exactly $d$. 
That is allowed, so you get no contradiction.

In fact, a direct $q$-ary analogue of the theorem would need extra conditions. For example, an MDS code such as a Reed–Solomon code has codewords of every weight from $d$ to $n$, so there are no gaps above $d$. Thus it does not satisfy the theorem’s hypothesis. This shows that the gap condition is very restrictive, and any $q$-ary version must impose additional assumptions, such as conditions on how the minimum-weight codewords intersect.

Therefore, the “mirror vanishing band” above $2d$ is a binary-only phenomenon. To get a similar result for $q$-ary codes, you would need extra conditions.

\section{Examples and Numerical Verification}\label{sec:examples}
We now illustrate the general results on some standard binary codes. For each
code we list the exact weight distribution, identify the parameters \(v\) and
\(t\), and compare the predicted vanishing bands and the new counting bound
with the Chen-Xie bound.

\subsection{The Golay shortening chain}\label{sec:golay}

The extended binary Golay code $G_{24}$ was constructed as the
overall-parity extension of the cyclic $[23,12,7]$ Golay code with
generator polynomial $g(x)=x^{11}+x^{10}+x^6+x^5+x^4+x^2+1$. Successive
shortening yields the chain
\[
[24,12,8]\ \longrightarrow\ [23,11,8]\ \longrightarrow\ [22,10,8]\
\longrightarrow\ [21,9,8],
\]
whose weight distributions, obtained by exhaustive enumeration, are
listed in Table~\ref{tab:golaydist}. All four codes appear in the tables
of best-known codes~\cite{chenxie2024,grassl}.

\begin{table}[htbp]
\centering
\caption{Weight distributions of the Golay shortening chain (all
unlisted $A_i$ are zero).}
\label{tab:golaydist}
\begin{tabular}{@{}lccccc@{}}
\toprule
Code & $A_8$ & $A_{12}$ & $A_{16}$ & $A_{24}$ & $s(C)$ \\
\midrule
$[24,12,8]_2$ & 759 & 2576 & 759 & 1 & 4 \\
$[23,11,8]_2$ & 506 & 1288 & 253 & -- & 3 \\
$[22,10,8]_2$ & 330 & 616 & 77 & -- & 3 \\
$[21,9,8]_2$ & 210 & 280 & 21 & -- & 3 \\
\bottomrule
\end{tabular}
\end{table}

All four codes share the parameters $v=2$ and $t=3$ (since
$A_9=A_{10}=A_{11}=0$), and the condition $k\ge n-2d+1$ of
Theorem~\ref{thm:main} holds in each case. Consequently:
\begin{itemize}[itemsep=0.2em]
\item Lemma~\ref{lem:cx} (Chen--Xie): $A_{14}=A_{15}=0$;
\item Theorem~\ref{thm:main} (mirror band): $A_{17}=A_{18}=A_{19}=0$;
\item Corollary~\ref{cor:count}: $s(C)\le n-13$, improving the
Chen--Xie bound $s(C)\le n-9$ (e.g., $9$ versus $15$ for $G_{24}$, and
the true values are $4$ and $3$).
\end{itemize}
Both bands were confirmed by the enumeration (see
Table~\ref{tab:verification}).

\begin{example}[{Extended Hamming code $[8,4,4]_2$}]\label{ex:hamming}
Here $A_0=A_8=1$, $A_4=14$; $v=2$, $t=3$. Theorem~\ref{thm:main} gives
$A_9=A_{10}=A_{11}=0$ (vacuous, beyond the length), and
Corollary~\ref{cor:count} in its $\max(0,\cdot)$ form gives $s(C)\le 2$,
which is tight.
\end{example}

\subsection{A dual BCH code, and Proposition~\ref{prop:2d} in action}
\label{sec:dualbch}

The dual of the two-error-correcting primitive BCH code $[31,21,5]_2$ is
a $[31,10,12]_2$ code; it appears in~\cite[Table~II]{chenxie2024} with
the exclusion $A_{23}=0$. Exhaustive enumeration gives its complete
weight distribution:
\[
A_{12}=310,\qquad A_{16}=527,\qquad A_{20}=186.
\]
\begin{itemize}[itemsep=0.2em]
\item $v=1$: the Chen--Xie band is $A_{23}=0$, confirmed;
\item $t=3$ (since $A_{13}=A_{14}=A_{15}=0$) and $k=10\ge 31-24+1=8$:
Theorem~\ref{thm:main} gives the mirror band $A_{25}=A_{26}=A_{27}=0$,
confirmed;
\item moreover $2d=24>20=D$, so the code is intersecting (verified by an
exhaustive check of all pairs of nonzero codewords), and since
$2d=24>n-k+1=22$, Proposition~\ref{prop:2d} yields $A_{24}=0$, again
confirmed by the distribution;
\item Corollary~\ref{cor:count}: $s(C)\le 31-12-1-6+1=13$, versus the
Chen--Xie bound $19$; the true value is $s(C)=3$.
\end{itemize}

\begin{table}[htbp]
\centering
\caption{Numerical verification of the vanishing bands on explicitly
constructed codes.}
\label{tab:verification}
\begin{tabular}{@{}lccccc@{}}
\toprule
Code & $v$ & $t$ & Chen--Xie band & Mirror band (Thm.~\ref{thm:main}) &
$s(C)$: true $\le$ new (old) \\
\midrule
$[24,12,8]_2$ & 2 & 3 & $A_{14}{=}A_{15}{=}0$ &
$A_{17}{=}A_{18}{=}A_{19}{=}0$ & $4\le 9\ (15)$ \\
$[23,11,8]_2$ & 2 & 3 & $A_{14}{=}A_{15}{=}0$ &
$A_{17}{=}A_{18}{=}A_{19}{=}0$ & $3\le 8\ (14)$ \\
$[22,10,8]_2$ & 2 & 3 & $A_{14}{=}A_{15}{=}0$ &
$A_{17}{=}A_{18}{=}A_{19}{=}0$ & $3\le 7\ (13)$ \\
$[21,9,8]_2$ & 2 & 3 & $A_{14}{=}A_{15}{=}0$ &
$A_{17}{=}A_{18}{=}A_{19}{=}0$ & $3\le 6\ (12)$ \\
$[31,10,12]_2$ & 1 & 3 & $A_{23}{=}0$ &
$A_{25}{=}A_{26}{=}A_{27}{=}0$ & $3\le 13\ (19)$ \\
$[8,4,4]_2$ & 2 & 3 & $A_{6}{=}A_{7}{=}0$ &
(vacuous) & $2\le 2\ (5)$ \\
\bottomrule
\end{tabular}
\end{table}

\section{Conclusion and Open Problems}\label{sec:conclusion}\label{sec:related}\label{sec:open}

We have established a \emph{mirror vanishing band} for the weight distribution of binary linear codes, complementing the Chen--Xie band below \(2d\). Our main result (Theorem~\ref{thm:main}) shows that if \(k\ge n-2d+1\) and the weight distribution has a gap \(A_{d+1}=\cdots=A_{d+t}=0\) for some \(t\ge 1\), then the mirror gap \(A_{2d+1}=\cdots=A_{2d+t}=0\) must also occur. This yields a two-sided exclusion pattern (Corollary~\ref{cor:twosided}) and improves the Chen--Xie upper bound on the number of nonzero weights from \(n-d+1-v\) to \(n-d-v-2t+1\) (Corollary~\ref{cor:count}). We also characterized codewords of weight exactly \(2d\) through the disjoint-support decomposition of non-minimal codewords (Proposition~\ref{prop:2d}), and demonstrated that the phenomenon is intrinsically binary: the \(q\)-ary analogue fails already for MDS codes (Section~\ref{sec:qary}). Numerical examples on the Golay shortening chain and a dual BCH code confirmed the theoretical predictions (Section~\ref{sec:examples}).

Several directions remain open. First, it would be interesting to extend the mirror band to other code classes that admit a disjoint-support decomposition, such as certain additive or nonlinear codes. Second, the conditional mirror bands for best-known codes with unknown weight distributions could provide new constraints; a systematic study of such parameter sets is worthwhile. Third, combining our method with the residual-code approach of the recent follow-up~\cite{newbounds2025} may yield even stronger vanishing results. Moreover, minimal codewords are central to linear secret-sharing schemes~\cite{dingding2015}, and minimum-weight codewords often support combinatorial designs via the Assmus--Mattson theorem~\cite{assmusmattson1969}; design-theoretic intersection properties of minimum-weight codewords may therefore supply the hypothesis of Proposition~\ref{prop:2d} for specific code families. Finally, the exact boundary between codes that exhibit the mirror phenomenon and those that do not deserves further investigation, as does the question of whether the gap condition \(A_{d+1}=\cdots=A_{d+t}=0\) can be weakened or replaced by other local information.

\end{document}